\DeclareMathOperator*{\argmax}{arg\,max}
\begin{document}

\preprint{APS/123-QED}

\title{How fragile is your network? More than you think.}

\author{Jeremie Fish}
\email{jafish@clarkson.edu}
 \affiliation{%
 Department of Electrical and Computer Engineering, Clarkson University
}%
\affiliation{Clarkson Center for Complex Systems Science}
\author{Mahesh Banavar}%
\affiliation{%
 Department of Electrical and Computer Engineering, Clarkson University
}%
\affiliation{Clarkson Center for Complex Systems Science}

\author{Erik Bollt}
\affiliation{%
 Department of Electrical and Computer Engineering, Clarkson University
}%
\affiliation{Clarkson Center for Complex Systems Science}
\theoremstyle{theorem}
\newtheorem{theorem}{Theorem}[section]

\newtheorem{corollary}{Corollary}[theorem]

\theoremstyle{definition}
\newtheorem{definition}{Definition}[section]

\theoremstyle{lemma}
\newtheorem{lemma}{Lemma}[section]

\date{\today}

\begin{abstract}

Graphs are pervasive in our everyday lives, with relevance to biology, the internet, and infrastructure, as well as numerous other applications. It is thus necessary to have an understanding of how quickly a graph disintegrates, whether by random failure or by targeted attack. While much of the interest in this subject has been focused on targeted removal of nodes, there has been some recent interest in targeted edge removal. Here, we focus on how robust a graph is against edge removal. We define a measure of network fragility, {\color{black} which allows us to classify networks in terms of fragility, defined as} the fraction of edges removed to the largest connected component.  We construct a class of graphs that is robust to edge removal {\color{black}under our classification system}. Furthermore, it is demonstrated that graphs generally disintegrate faster than would be anticipated by a greedy targeted attack. Finally, it is shown that {\color{black}under} our fragility measure {\color{black} structures are more brittle to edge removal than much of the previous literature would have indicated}, as demonstrated with real and natural networks. 
\end{abstract}

\maketitle


\section{Introduction}
Complex networks can be found in many areas of our lives, from the brain \cite{van2010,bassett2011,park2013,fish2021}, to our infrastructure \cite{braess2005,motter2013,torres2017,guimera2004} to our social interactions \cite{miller2015,himelboim2017} among others. Given how central they are, a question which may arise is how fragile/robust are these networks to lost edges or nodes? 
\paragraph*{}
As an example, when hurricane Irene ravaged the eastern coastline in $2011$, one region in northern New York was effectively cut off from aid due to the destruction of NY 73 \cite{mackenzie2016}. In Germany in 2006, a single high voltage power line was shut off to allow a cruise ship to pass, triggering a power outage for millions of people \cite{UCTE2006}.
\paragraph*{}
While the above situations were quite different, it is clear that an understanding of what makes a complex network robust to failure is necessary to avoid potentially catastrophic situations. In the present work we will focus on the  fragility of a graph to edge removal.
\paragraph*{}
It is not always evident that a network will be fragile to edge removal. In the cases above the fragility could be attributed to the sparsity of the network, that is the vast majority of possible edges available are not realized within the graph. However, even relatively dense networks may be quite fragile, a simple example of this is shown in Fig. \ref{fig:Fragile_example}. The removal of a single edge is capable of splitting the graph into two equal size components.
\begin{figure}[htbp] 
\includegraphics[width=0.35\textwidth]{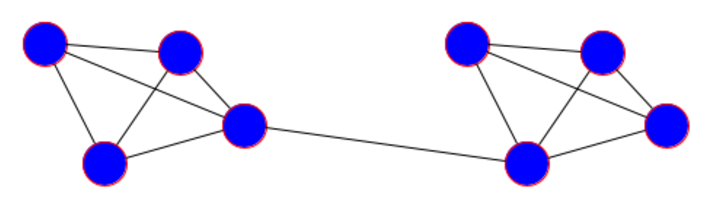}
\caption{An example of a graph that is very fragile to edge removal, despite being fairly dense.} \label{fig:Fragile_example}
\end{figure} 
The single edge connecting two highly connected components can be thought of as a bottleneck, and the graph thus has a small Cheeger constant \cite{chung1997,chung2005}.
\paragraph*{}
The example shown in Fig. \ref{fig:Fragile_example} gives the overall sense of what fragile means, but the term fragility remains fuzzy. Intuitively, if only a `few' edges must be removed to break the network into `small' pieces, then we will call a network fragile in terms of edge removal. In order to make clear what is meant by `few' and `small', the central question we will be asking in this work is the following: what fraction of edges must be removed for the resulting graph to have some fraction of nodes remain in the largest connected component? It is through this question we will make clear what we mean by the term fragility.
\paragraph*{}
{\color{black} There have been a variety of attempts to tackle the problem of fragility of networks.} For instance, fragility has been measured in the context of a percolation transition \cite{cohen2000,cohen2001}, where a network disintegrates almost surely under attack (random or targeted) after a critical percentage $p_c$ of edges or nodes have been removed. This idea of fragility is frequently used and has led to numerous questions about what makes a network more or less fragile. For instance it was found that networks which are more assortative appear to be more robust to random attack \cite{newman2003}. Attempts have been made under this same idea of fragility to optimize networks against random and targeted attacks simultaneously \cite{paul2004}.
\paragraph*{}
Other measures of fragility/robustness have been proposed as well, and generally are connected to the size of the largest connected component. One such measure proposed in \cite{schneider2011,louzada2015}:
\begin{equation}
   \mbox{R}_{\mbox{n}} = \frac{1}{n}\sum \limits_{\mbox{Q}=1}^{n} f(\mbox{Q}),
\end{equation}
where $f(\mbox{Q})$ represents the fraction of the $n$ nodes {\color{black} remaining} in the largest connected component after removing $\mbox{Q}$ nodes, {\color{black}averaged} over the fraction {\color{black}which remain} in the largest connected component for each {\color{black}$Q$ from $1$ to $n$}, and under this metric discovered `onion like' networks were most robust. This metric was also generalized to edge attacks \cite{zeng2012,duan2016}
\begin{equation}
     \mbox{R}_{\mbox{l}} = \frac{1}{m}\sum \limits_{\mbox{P}=1}^{m} f(\mbox{P})
\end{equation}
with $f(\mbox{P})$ denoting the fraction of nodes in the largest connected component after the removal of $\mbox{P}$ edges. 
{\color{black}It} was discovered that the optimal network in the case of node removals is not optimal in the case of edge attacks. 
\paragraph*{}
{\color{black} This problem and variants of it have also been posed in a variety of other ways. For instance, the minimum cut problem or more precisely, the minimum $k$ cut problem, which can be found with an algorithm which is polynomial in the number of nodes \cite{karger1996}, with high probability. However, as will be discussed in more detail below, we will place an additional constraint on the maximum size of the largest connected component which makes the problem more challenging.} 
\paragraph*{}
{\color{black} While not directly related, a similar problem is one of community detection. Several techniques have been developed for this as well such as the Girvan-Newman \cite{newman2004finding} algorithm, the Kernighan-Lin bisection \cite{kernighan1970} (which has been generalized to a k-section), the METIS algorithm \cite{karypis1997} (as a k-section method), and what has become known as the Louvain \cite{blondel2008} community detection algorithm among others.}
\paragraph*{}
In the existing literature, the fragility/robustness of a network to targeted attack is judged based on its ability to withstand sequential attacks, and generally view the fragility/robustness as a single measure.  However, this paradigm leads to an underestimate of the true fragility and ignores scenarios where one may wish to compare graphs based on their ability to absorb a certain level of destruction. Hence, below, we will introduce a measure which is not based on sequential attack, for which exact values can be obtained for certain graphs in the limit as the number of nodes goes to infinity, and is not a single measure per network. {\color{black} Additionally, we utilize several algorithms to provide estimates of the measure, which are tested on both randomly generated structures, as well as a real world example.}
\paragraph*{}
The remainder of this work is laid out as follows.
We will begin by offering a definition of {\it fragility}, relative to the fraction of edges which have been removed from the original graph (network). Next we define {\it robust} graphs, and follow with a proof that a certain class of graphs is robust. Finally, we will investigate how fragile some real networks are under the newly defined measure.
\section{Main Results}
\subsection{Network Fragility}
Throughout this paper, we will use the terms graph and network interchangeably. We begin by offering a definition of a graph.
\begin{definition}{Graph:}
A {\it graph} $G=(V,E)$ is a set containing $n$ vertices $V = \{v_1,v_2,...,v_n\}$ and a set of edges $E \subseteq (V \times V)$.
\end{definition}
Graphs can be broken into directed and undirected graphs, based upon whether the edges are oriented. 
\begin{definition}{Undirected graph:}
An {\it undirected} graph is a graph in which if $(v_i,v_j) \in E$, then $(v_j,v_i) \in E$.
\end{definition}
We will be exclusively exploring undirected graphs and $(v_i,v_j)$ along with $(v_j,v_i)$ will be counted as a single edge. As we are interested in how a graph changes with removed edges, we define a {\it perturbed graph} below.
\begin{definition}{Perturbed graph:}
A {\it perturbed} graph $G'(r) = (V,E')$ of $G$ is a graph with $r$ subtracted edges such that,
\begin{equation}
E \cap E' = E',
\end{equation}
and
\begin{equation}
\mbox{card}(E\backslash E') = r,
\end{equation}
\end{definition}
where $\mbox{card}(\cdot)$ is the cardinality of a set.
We call a graph {\it connected} if for every disjoint partition of nodes there is at least one edge which joins nodes from different partitions. Not all graphs are connected however, and thus we will define the {\it largest connected component} of a graph.
\begin{definition}{Largest connected component:}
Begin with a partition of $V$ into two disjoint sets $W,X \subset V$, with
\begin{equation}
W \cup X = V,
\end{equation}
\begin{equation}
W \cap X = \emptyset.
\end{equation}
Let 
\begin{equation}
E' = (W \times W) \cap E,
\end{equation}
and $G' = (W,E')$.
$G'$ is called a component of $G$. Furthermore, noting that a simple relabeling of the nodes of $G$ remains the same graph $G$, let $W = \{v_1,...,v_k\}$ and $X = \{v_{k+1},...,v_n\}.$ Finally, let,
\begin{equation}
C = \{x,y | (x,y) \in E'\},
\end{equation}
{\color{black} which is the set containing all unique nodes forming edges in $E'$}.
The {\it largest connected component} $\mbox{LCC}(G)$ of $G$ is the component $G'$ with largest $\mbox{card}(W)$ over all possible node relabelings, with,
\begin{equation}
v_i \in C \ (\forall i \in \{1,...,k\}).
\end{equation}
In other words, $\mbox{LCC}(G)$ is the largest possible component in which all nodes of the component have at least one edge.
\end{definition}
Certain classes of graphs are amenable to analysis because of their predictable structure. One such class of graphs are complete graphs {\color{black}and} will now be defined.
\begin{definition}{Complete Graph:}
A graph $G$ is called {\it complete} if the edge-set of the graph is given by $E = (V \times V) / \{(v_i,v_i)\}$, and the complete graph on $n$ nodes will be denoted by $K_n$. The number of edges of $K_n$ will be written as $\mbox{card}(E(K_n)) = \frac{n(n-1)}{2}$, where $E(\cdot)$ is the edge set of a graph.
\end{definition}
The central question that we will examine throughout this paper is how quickly a graph falls apart after targeted edge removals.  By fall apart, we mean that a component disintegrates to smaller components. Suppose one wants to know what the minimal number of edges required to be removed from $K_n$ in order to have a connected component of no larger than size $c$. As it turns out, the most effective strategy is to split the graph into as many components of size $c$ and a single component of size $b<c$ to remove the minimal number of edges. 

We now require the following result as a preliminary for what follows. 
\begin{theorem}{Splitting squares \label{thm:SplittingSquares}}
\\
Let $c> d_0 \geq d_1 \geq ... \geq d_l \geq d_{l+1}= 0$, $c,d_i \in \mathbb{N} \ (\forall i \in \{0,1,...,l+1\})$, $a_0 = c-d_0$, $a_i = d_{i-1}-d_i \ (\forall i >0)$ and $\sum \limits_{i=0}^{l+1} a_i = c$. Then 
\begin{equation}
c^2 \geq (c-d_0)^2 + \sum \limits_{i=1}^{l-1} (d_{i-1} -d_{i})^2 + d_l^2. \label{eq:SplittingSquares}
\end{equation}
In other words splitting $c$ into any number of integers will always result in a sum of squares which is less than or equal to $c^2$. \label{thm:splittingsquares}
\end{theorem}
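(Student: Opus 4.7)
The plan is to recognize that the theorem is, at its heart, the elementary inequality that for non-negative reals summing to $c$, the sum of squares cannot exceed $c^{2}$. So the entire job reduces to (i) showing the differences $a_{i}$ defined in the statement are non-negative and genuinely partition $c$, and (ii) expanding $c^{2}=(\sum_{i}a_{i})^{2}$ and discarding non-negative cross terms.

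First I would verify the non-negativity and the telescoping. Since $c>d_{0}$, we have $a_{0}=c-d_{0}\ge 1>0$. Since the sequence $d_{0}\ge d_{1}\ge\cdots\ge d_{l}\ge d_{l+1}=0$ is non-increasing, each $a_{i}=d_{i-1}-d_{i}\ge 0$ for $i\ge 1$, and in particular $a_{l+1}=d_{l}-0=d_{l}$. The telescoping gives $\sum_{i=1}^{l+1}a_{i}=d_{0}$, so combined with $a_{0}=c-d_{0}$ we get $\sum_{i=0}^{l+1}a_{i}=c$, consistent with the hypothesis.

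Next I would carry out the one-line algebraic step: since $c=\sum_{i=0}^{l+1}a_{i}$, the square expands as
\begin{equation}
c^{2}=\left(\sum_{i=0}^{l+1}a_{i}\right)^{2}=\sum_{i=0}^{l+1}a_{i}^{2}+2\sum_{0\le i<j\le l+1}a_{i}a_{j}.
\end{equation}
All terms in the double sum are products of non-negative integers and hence non-negative, so dropping them gives $c^{2}\ge \sum_{i=0}^{l+1}a_{i}^{2}$. Substituting the definitions $a_{0}=c-d_{0}$, $a_{i}=d_{i-1}-d_{i}$, and $a_{l+1}=d_{l}$ reproduces precisely the right-hand side of \eqref{eq:SplittingSquares} (up to what looks like a minor off-by-one in the index range on $i$ printed in the statement; the intended sum clearly runs through $i=l$ so that the $d_{l}^{2}$ tail term accounts for $a_{l+1}^{2}$).

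There is really no difficult step here: the only thing worth flagging is the necessity of non-negativity of every $a_{i}$, which is why the monotonicity hypothesis $d_{0}\ge d_{1}\ge\cdots\ge d_{l+1}=0$ was placed on the sequence in the first place. Equality holds iff at most one of the $a_{i}$ is nonzero, i.e.\ iff $c$ is not actually split. Thus the inequality is strict under the hypothesis $c>d_{0}$ together with any further refinement, which is exactly the form in which the result will be used later to argue that a minimizer of edge removals splits the vertex set into maximally many pieces of the target size.
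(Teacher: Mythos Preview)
Your argument is correct. You take a different route from the paper, though: the paper proceeds iteratively, first establishing $c^{2}\ge (c-d_{0})^{2}+d_{0}^{2}$ from the observation $(c+d_{0})(c-d_{0})\ge (c-d_{0})^{2}$, then peeling off $d_{0}^{2}\ge (d_{0}-d_{1})^{2}+d_{1}^{2}$ in the same way, and so on down the chain. Your direct expansion $c^{2}=(\sum_{i}a_{i})^{2}=\sum_{i}a_{i}^{2}+2\sum_{i<j}a_{i}a_{j}\ge \sum_{i}a_{i}^{2}$ reaches the same conclusion in a single step and makes the role of the monotonicity hypothesis (non-negativity of every $a_{i}$, hence of every cross term) completely transparent; it also immediately yields the equality case. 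The paper's telescoped version has the minor advantage of showing explicitly how each successive refinement can only decrease the sum of squares, which mirrors how the result is later applied one split at a time, but mathematically the two arguments are equivalent and yours is the cleaner presentation.
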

\begin{proof}
It is clear that:
\begin{equation}
(c+d_0)(c-d_0) \geq (c-d_0)(c-d_0),
\end{equation}
since $c\geq d_i \geq 0$. Then
\begin{equation}
c^2 - d_0^2 \geq (c-d_0)^2 \implies c^2 \geq (c-d_0)^2 + d_0^2.\label{eq:Split1}
\end{equation}
Now suppose we split $c$ with another term. Then,  {\color{black} from (\ref{eq:Split1})}:
\begin{equation}
(c-d_0)^2 + (d_0+d_1)(d_0-d_1) \geq (c-d_0)^2 + (d_0-d_1)(d_0-d_1) \implies c^2 \geq (c-d_0)^2 + d_0^2 \geq (c-d_0)^2+(d_0 - d_1)^2 +d_1^2
\end{equation}
Finally suppose we split $c$, $(l+1)$ times, we have:
\begin{equation}
c^2 \geq (c-d_0)^2 + (d_0-d_1)^2 + ... + (d_{l-1}+d_{l})(d_{l-1}-d_{l}) \geq (c-d_0)^2 + (d_0-d_1)^2 + ... + (d_{l-1}-d_{l})(d_{l-1}-d_{l}),
\end{equation}
which can be rewritten as Eq. \ref{eq:SplittingSquares}.
\end{proof}
Exploiting Thm. \ref{thm:SplittingSquares}, the method outlined above can be shown to be the most efficient edge removal strategy for complete graphs.
\begin{theorem}{Efficient destruction of Complete graphs via edge removal \label{thm:Efficient1}}
\\
Let the graph $K_n$ be a complete graph, and let $\mbox{LCC}(K'(r)_n)$ be the largest component of the perturbed graph $K'(r)_n$. Set $r_{complete}^*(c) = \min \limits_r (\mbox{card}(\mbox{LCC}(K'(r)_n)) = c)$ with $c<n$. Then 
\begin{equation}
r_{complete}^*(c) = \mbox{card}(E(K_n)) - \Bigl( \Bigl \lfloor \frac{n}{c} \Bigr \rfloor \mbox{card}(E(K_c)) + \mbox{card}(E(K_{b})) \Bigr), \label{eq:CompGraphr}
\end{equation}
where $b = (n \bmod c)$ and $\lfloor \cdot \rfloor$ is the floor function.
\end{theorem}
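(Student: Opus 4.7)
The plan is to reduce the optimization over edge removals to a combinatorial partition problem about vertex set sizes, and then to invoke Theorem 1.1 (Splitting Squares) to identify the optimal partition.

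First I would observe that after removing $r$ edges from $K_n$, the perturbed graph $K'(r)_n$ decomposes into connected components $C_1, C_2, \ldots, C_k$ whose vertex sets partition $V$ and satisfy $\sum_{i=1}^{k} |C_i| = n$. Any edge of $K_n$ that lies between two different components must have been removed, and the edges that can possibly remain are only those inside some component. Consequently
\begin{equation}
r \;\geq\; \mbox{card}(E(K_n)) - \sum_{i=1}^{k} \mbox{card}(E(K_{|C_i|})) \;=\; \binom{n}{2} - \sum_{i=1}^{k} \binom{|C_i|}{2},
\end{equation}
with equality achievable by removing exactly the cross-component edges (each $C_i$ inherits the complete graph $K_{|C_i|}$ since we started with $K_n$). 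So the minimization of $r$ under the constraint $\mbox{card}(\mbox{LCC}(K'(r)_n)) = c$ is equivalent to \emph{maximizing} $\sum_i \binom{|C_i|}{2}$ subject to $|C_i| \le c$ for all $i$, $\max_i |C_i| = c$, and $\sum_i |C_i| = n$.

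Next I would replace $\binom{|C_i|}{2}$ by $\tfrac{1}{2}(|C_i|^2 - |C_i|)$; since $\sum_i |C_i| = n$ is fixed, the problem becomes equivalent to maximizing $\sum_i |C_i|^2$ under the same constraints. I would then apply Theorem 1.1 in its contrapositive form: splitting any integer $c$ into smaller nonnegative integer pieces can only decrease the sum of squares, so conversely, merging pieces can only increase it. Starting from any feasible partition of $n$ into pieces of size at most $c$, I would repeatedly merge pairs of small pieces whose combined size remains at most $c$; this process strictly increases $\sum_i |C_i|^2$ (by Thm.~1.1 applied to the pair being merged) and terminates when no two remaining pieces can be combined without exceeding $c$. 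Any terminal configuration must consist of some number of pieces of size exactly $c$, together with at most one smaller piece; since the total must be $n$, the unique such configuration is $\lfloor n/c \rfloor$ pieces of size $c$ and, if $b = n \bmod c > 0$, a single additional piece of size $b$.

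Finally, substituting this optimal partition back gives
\begin{equation}
\sum_i \binom{|C_i|}{2} \;=\; \Bigl\lfloor \tfrac{n}{c} \Bigr\rfloor \binom{c}{2} + \binom{b}{2} \;=\; \Bigl\lfloor \tfrac{n}{c} \Bigr\rfloor \mbox{card}(E(K_c)) + \mbox{card}(E(K_b)),
\end{equation}
which yields the stated formula (Eq.~\eqref{eq:CompGraphr}) for $r^*_{complete}(c)$. The main obstacle I anticipate is the clean justification of the merging step: one must verify that every locally maximal partition (under the $\le c$ constraint) has at most one piece of size strictly less than $c$, since otherwise two sub-$c$ pieces could be consolidated (possibly into one piece of size $c$ plus a remainder) without violating the constraint, contradicting maximality. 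A short case analysis on pairs of sub-$c$ pieces, together with one final invocation of Theorem 1.1 to certify the strict increase, should close this gap.
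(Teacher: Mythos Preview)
Your proposal is correct and follows essentially the same route as the paper: both arguments reduce the edge-removal minimization to maximizing $\sum_i |C_i|^2$ over integer partitions of $n$ with parts at most $c$, and then invoke Theorem~\ref{thm:SplittingSquares} to identify the optimum as $\lfloor n/c\rfloor$ copies of $c$ together with a single remainder $b$. Your merging/redistribution argument is in fact a cleaner way to justify the maximizer than the paper's grouping of terms, and the case analysis you flag (two sub-$c$ parts with sum exceeding $c$ get replaced by $c$ and a smaller remainder, which strictly increases the sum of squares) is exactly what is needed to close the argument.
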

\begin{proof}
Let $m = \mbox{card}(E(K_n))$, $b = (n \bmod c)$ with $c$ the number of nodes in the largest connected component, and the number of edges remaining after $r$ removals be $q$. Clearly 
\begin{equation}
r = m-q.
\end{equation}
Note that since $E(K_n) = \frac{n^2-n}{2}$ that $E(K_n)$ grows as $n^2$. Suppose that $b = 0$, in other words suppose that $c$ evenly divides $n$. Then $\frac{n}{c}\mbox{card}(E(K_c))$ is the largest possible number of edges remaining in the graph, and 
\begin{equation}
r^*(c) = m-\frac{n}{c}\mbox{card}(E(K_c)). \label{eq:Optimal1}
\end{equation}
Eq. \ref{eq:Optimal1} follows from Thm. \ref{thm:SplittingSquares} and from the fact that $\mbox{card}(E(K_n))$ grows as $n^2$. Now suppose that $b>0$ and note that $b<c$ and also note that $n = \Bigl \lfloor \frac{n}{c} \Bigr \rfloor c+ b$. Now both $b$ and $c$ can be split and combined in any manner to form a sum of squares which may be written:
\begin{equation}
e = \sum \limits_{i=0}^l a_i^2,
\end{equation}
with $\sum \limits_{i=0}^l a_i = \Bigl(\Bigl \lfloor \frac{n}{c} \Bigr \rfloor c+b \Bigr)$ and $0 \leq a_i \leq c \ (\forall i)$ since $c$ is the largest allowable connected component. We may now order the terms such that $\sum \limits_{i=0}^{l-k}a_i = \Bigl \lfloor \frac{n}{c} \Bigr \rfloor c$ and $\sum \limits_{l-k+1}^l a_i = b$, with $a_i \leq b \ (\forall i \geq [l-k+1])$ . Then 
\begin{equation}
e \leq b^2 + \Bigl \lfloor \frac{n}{c} \Bigr \rfloor c^2,
\end{equation}
follows from Thm. \ref{thm:SplittingSquares}. This implies that:
\begin{equation}
r^*(c) = m-\Bigl(\Bigl \lfloor \frac{n}{c}\Bigr \rfloor \mbox{card}(E(K_c))+ \mbox{card}(E(K_b))\Bigr) \label{eq:Optimal2}, 
\end{equation}
for the complete graph.

\end{proof}
We assert that the complete graph is the least fragile in terms of edge removals {\color{black} by virtue of the fact that any graph $G = (V,E)$, will be a subgraph of its corresponding (same size) complete graph $K = (V,E')$, with $E \subset E'$}. Thus in theory, in situations where one is concerned with the failure of a graph due to edge removals, one would design the graph of the system to be a complete one. However this is impractical in real world situations, often due to financial constraints as well as the enormity of the graphs at hand that tend to favor sparsity. For instance in the US, according to the Bureau of Transportation Statistics, there are approximately $20,000$ {\color{black} airports}, meaning that to connect every  {\color{black}airport} directly to every other, it would be necessary to build $\approx {\color{black}400,000,000}$ {\color{black}{flights}}. {\color{black}Performing all of these flights} would be prohibitively expensive. {\color{black}We note that many of the methods presented in the remainder of the paper would be prohibitively expensive for estimation of the fragility of such large graphs, with the exception of METIS which has its own limitations (for instance needing to form a balanced partition), thus in the remainder of the paper we examine much smaller graphs. We believe these smaller graphs illustrate the main point of the paper, however, that most methods underestimate the fragility of networks.} Instead, it is desirable to find graphs which are less dense, but which have a similar level of stability against edge removals. One such graph, which we will call the complete equitable bipartite graph, is presented in Fig. \ref{fig:CEBExamples}.
\begin{figure}[htbp] 
\includegraphics[width=0.5\textwidth]{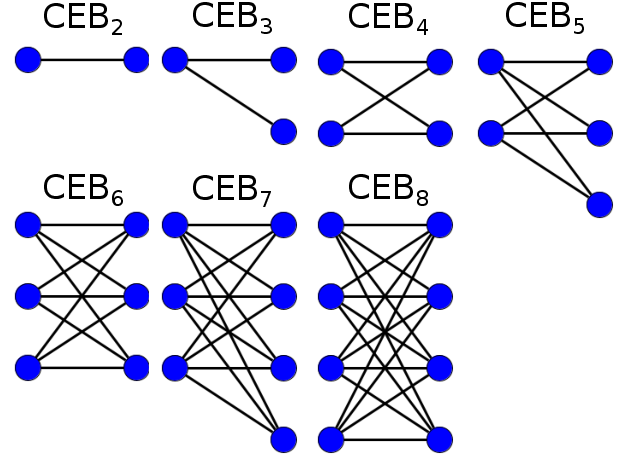}
\caption{Examples of CEB graphs. For even $n$ we can see that there are $\bigl(\frac{n}{2}\bigr)^2$ edges, while for odd $n$ the number of edges {\color{black}is given by} $\bigl(\frac{n+1}{2}\bigr)^2-\frac{n+1}{2}$.} \label{fig:CEBExamples}
\end{figure} 
\begin{definition}{Complete Equitable Bipartite (CEB) Graph:}
\\
We call a graph $G$ with number of nodes $n$ a {\it complete equitable bipartite} (CEB) graph if the graph is partitioned into two disjoint sets of nodes $W \cup X = V, W \cap X = \emptyset$ such that the cardinality of $W$ and $X$ differs by at most 1 (that is $|\mbox{card}(W)-\mbox{card}(X)|\leq 1$, where $|\cdot|$ is the absolute value and if the graph has edge-set $E = (W \times X) \cup (X \times W)$.
\end{definition}
The CEB graph on $n$ nodes will be denoted by $\mbox{CEB}_n$. Note that,
\begin{equation}
\mbox{card}(E(\mbox{CEB}_n)) = 
\begin{cases}
\Bigl(\frac{n}{2}\Bigr)^2, \ \ \mbox{for} \ n \ \mbox{even},
\\
\Bigl(\frac{n+1}{2}\Bigr)^2 - \frac{n+1}{2}, \ \ \mbox{for} \ n \ \mbox{odd},
\end{cases} 
\end{equation} 
as can be seen in Fig. \ref{fig:CEBExamples}.
Since $\mbox{card}(E(\mbox{CEB}_n)) \propto n^2$, we can follow the logic of Thm. \ref{thm:Efficient1} and find that: 
\begin{equation}
r^*_{\mbox{CEB}}(c) = \mbox{card}(E(\mbox{CEB}_n)) - \Bigl( \Bigl \lfloor \frac{n}{c} \Bigr \rfloor \mbox{card}(E(\mbox{CEB}_c)) + \mbox{card}(E(\mbox{CEB}_{b})) \Bigr).
\end{equation}
In order to make a comparison of graphs, we define the fragility of a graph to edge removals.
\begin{definition}{Fragility:}
\\
Let $G$ be a graph,  $\delta = \frac{c}{\mbox{card}(G)}<1$ be the fractional component size and $f_G(\delta) = \frac{r^*(c)}{\mbox{card}(E(G))}$ be the critical edge fraction. Then we define the {\it fragility} of the graph $G$ as:
\begin{equation}
\mathcal{F}_{\delta}(G) = 1-\frac{f_G(\delta)}{f_{\mbox{comp}}(\delta)},
\end{equation}
where, 
\begin{equation}
f_{\mbox{comp}} = \frac{ \mbox{card}(E(K_n)) - \Bigl( \Bigl \lfloor \frac{n}{c} \Bigr \rfloor \mbox{card}(E(K_c)) + \mbox{card}(E(K_{b})) \Bigr)}{ \mbox{card}(E(K_n))}
\end{equation}
is the critical edge fraction of the complete graph.
\end{definition}
{\color{black} We assert, without proof, that}, any graph $G$, $f_{\mbox{comp}}(\delta) \geq f_G(\delta) \ (\forall \delta)$, which if true means that $\mathcal{F}_\delta(G) \in [0,1]$. Now that the notion of fragility is defined, it is only natural to examine what it means for a graph to be robust.
\begin{definition}{Robust graphs:}
\\
We call a graph robust if for a given $\delta<1$, $\mathcal{F}_{\delta}(G) < \epsilon$, where {\color{black}$\epsilon \rightarrow 0$}. Additionally, we will call a graph asymptotically robust if $\forall \delta<1$, $\mathcal{F}_\delta(G) \rightarrow 0$ when $n \rightarrow \infty$. 
\end{definition}

Clearly, the complete graph is asymptotically robust. Now we will show that CEB graphs are robust as well.
\begin{theorem}{CEB Graphs Are Asymptotically Robust}
\\
If a graph is CEB then it is asymptotically robust.
\end{theorem}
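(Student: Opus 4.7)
The plan is to show directly that $f_{\mbox{CEB}}(\delta)/f_{\mbox{comp}}(\delta) \to 1$ as $n \to \infty$ for every fixed $\delta \in (0,1)$, which by definition forces $\mathcal{F}_\delta(\mbox{CEB}_n) = 1 - f_{\mbox{CEB}}/f_{\mbox{comp}} \to 0$. Since both critical edge fractions are given by the closed form derived above, the entire argument reduces to an asymptotic expansion. I would write $c = \lfloor \delta n \rfloor$, $k = \lfloor n/c \rfloor$, $b = n - kc$, and record two facts used throughout: $k$ is uniformly bounded (eventually $k \le \lfloor 1/\delta \rfloor + 1$), and the parity corrections in $\mbox{card}(E(\mbox{CEB}_m)) = m^2/4 - \eta_m$, where $\eta_m \in \{0, 1/4\}$, remain $O(1)$ after summation over the $k+1$ block sizes.

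The key calculation uses $n = kc + b$ to collapse all linear terms. For the complete graph, substituting $\mbox{card}(E(K_m)) = (m^2-m)/2$ and using $kc + b = n$ yields
\begin{equation}
\mbox{card}(E(K_n)) - \Bigl\lfloor \tfrac{n}{c} \Bigr\rfloor \mbox{card}(E(K_c)) - \mbox{card}(E(K_b)) = \tfrac{1}{2}\bigl(n^2 - kc^2 - b^2\bigr).
\end{equation}
Repeating the same algebra with $m^2/4$ in place of $(m^2-m)/2$ gives
\begin{equation}
\mbox{card}(E(\mbox{CEB}_n)) - \Bigl\lfloor \tfrac{n}{c} \Bigr\rfloor \mbox{card}(E(\mbox{CEB}_c)) - \mbox{card}(E(\mbox{CEB}_b)) = \tfrac{1}{4}\bigl(n^2 - kc^2 - b^2\bigr) + O(1).
\end{equation}
Dividing by $\mbox{card}(E(K_n)) = n(n-1)/2$ and $\mbox{card}(E(\mbox{CEB}_n)) = n^2/4 + O(1)$ respectively, the common factor $n^2 - kc^2 - b^2$ cancels in the ratio and leaves
\begin{equation}
\frac{f_{\mbox{CEB}}(\delta)}{f_{\mbox{comp}}(\delta)} = \frac{n-1}{n} \cdot \frac{n^2 - kc^2 - b^2 + O(1)}{n^2 - kc^2 - b^2} \cdot \frac{n^2}{n^2 + O(1)},
\end{equation}
which will converge to $1$ as soon as $n^2 - kc^2 - b^2 = \Theta(n^2)$.

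The only genuine obstacle is verifying this last claim, and I would handle it with the algebraic identity
\begin{equation}
n^2 - kc^2 - b^2 = kc\bigl(c(k-1) + 2b\bigr),
\end{equation}
obtained by expanding $n^2 = (kc+b)^2$. For $k \ge 2$ the right-hand side is at least $k(k-1)c^2 \ge 2(\delta n - 1)^2$; for $k = 1$ (which happens precisely when $\delta > 1/2$) it collapses to $2bc = 2c(n - c) \sim 2\delta(1-\delta)n^2$. In both regimes the expression is $\Theta(n^2)$ and bounded away from zero for every fixed $\delta \in (0,1)$, so the $O(1)$ corrections from parity and floor rounding are genuinely negligible, the displayed ratio tends to $1$, and hence $\mathcal{F}_\delta(\mbox{CEB}_n) \to 0$, as required.
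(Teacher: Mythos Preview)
Your argument is correct and, in several respects, cleaner than the paper's. The paper proceeds by splitting into four parity cases for $(n,c)$, writing out the exact rational expression for $\mathcal{F}_\delta(\mbox{CEB}_n)$ in each, and then checking particular values $c=n-1$, $c=n-2$, general $c=n-k$ with $k<n/2$, and $c=n/2$, arguing in each instance that numerator and denominator share the same leading $n^4$ term. You instead absorb all parity corrections into a uniform $O(1)$ (legitimate because $k\le\lfloor 1/\delta\rfloor+1$ is bounded), exploit $n=kc+b$ to cancel the linear terms and exhibit the common factor $n^2-kc^2-b^2$ in both $r^*_{\mbox{comp}}$ and $r^*_{\mbox{CEB}}$, and then use the factorisation $n^2-kc^2-b^2=kc\bigl(c(k-1)+2b\bigr)$ to establish the $\Theta(n^2)$ lower bound in one stroke. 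What you gain is uniformity: a single asymptotic estimate covers every $\delta\in(0,1)$ and every parity combination simultaneously, and the crucial lower bound on the denominator is made explicit rather than inferred from inspecting leading coefficients case by case. What the paper's route offers in exchange is the exact closed form of $\mathcal{F}_\delta$ at each step, which your $O(1)$ bookkeeping deliberately suppresses.
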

\begin{proof}
Note that in the case of $n$ even and $c$ even we have:
\begin{equation}
\mathcal{F}_{\delta}(\mbox{CEB}_n) = 1 - \frac{\frac{n^2-n}{2}[(\frac{n}{2})^2 - (\lfloor \frac{n}{c}\rfloor (\frac{c}{2})^2 + (\frac{b}{2})^2)]}{(\frac{n}{2})^2 [(\frac{n^2-n}{2})-(\lfloor \frac{n}{c} \rfloor (\frac{c^2-c}{2})+(\frac{b^2-b}{2}))]}.
\label{eq:AllEven}
\end{equation}
For $n$ even and $c$ odd,
\begin{equation}
\mathcal{F}_{\delta}(\mbox{CEB}_n) = 1 - \frac{\frac{n^2-n}{2}[(\frac{n}{2})^2 - (\lfloor \frac{n}{c}\rfloor (\frac{c^2-1}{4}) + (\frac{b^2-1}{4}))]}{(\frac{n}{2})^2 [(\frac{n^2-n}{2})-(\lfloor \frac{n}{c} \rfloor (\frac{c^2-c}{2})+(\frac{b^2-b}{2}))]},
\label{eq:EvenThenOdd}
\end{equation}
in the case of $n$ odd and $c$ even,
\begin{equation}
\mathcal{F}_{\delta}(\mbox{CEB}_n) = 1 - \frac{\frac{n^2-n}{2}[(\frac{n^2-1}{4}) - (\lfloor \frac{n}{c}\rfloor (\frac{c}{2})^2 + (\frac{b^2-1}{4}))]}{(\frac{n^2-1}{4}) [(\frac{n^2-n}{2})-(\lfloor \frac{n}{c} \rfloor (\frac{c^2-c}{2})+(\frac{b^2-b}{2}))]},
\label{eq:OddThenEven}
\end{equation}
and finally for $n$ odd and $c$ odd,
\begin{equation}
\mathcal{F}_{\delta}(\mbox{CEB}_n) = 1 - \frac{\frac{n^2-n}{2}[(\frac{n^2-1}{4}) - (\lfloor \frac{n}{c}\rfloor (\frac{c^2-1}{4}) + (\frac{b}{2})^2)]}{(\frac{n^2-1}{4}) [(\frac{n^2-n}{2})-(\lfloor \frac{n}{c} \rfloor (\frac{c^2-c}{2})+(\frac{b^2-b}{2}))]}.
\label{eq:allEven}
\end{equation}
Examining the case of $n$ even and $c = n-1$ and noting that for $b = 1$ the term containing $b$ is $0$,
\begin{equation}
\mathcal{F}_{\frac{n-1}{n}}(\mbox{CEB}_n) = 1 - \frac{\frac{n^2-n}{2}[(\frac{n}{2})^2 -  ((\frac{(n-1)^2-1}{4}) )]}{(\frac{(n-1)^2-1}{4})^2 [(\frac{n^2-n}{2})-((\frac{(n-1)^2-(n-1)}{2}))]}.
\label{eq:NminusOne}
\end{equation}
Taking the limit of Eq. \ref{eq:NminusOne} as $n\rightarrow \infty$ and noting that the largest terms in both the numerator and denominator are $\frac{1}{8}n^4$ it is easy to see that in this case $\lim_{n \to \infty} \mathcal{F}_\frac{n-1}{n} (\mbox{CEB}_n) = 0.$ For $n$ even and $c = n-2$  it can be seen that :
\begin{equation}
\mathcal{F}_{\frac{n-2}{n}}(\mbox{CEB}_n) = 1 - \frac{\frac{n^2-n}{2}[(\frac{n}{2})^2 -  ((\frac{(n-2)}{2})^2 +1 )]}{(\frac{n}{2})^2 [(\frac{n^2-n}{2})-((\frac{(n-2)^2-(n-2)}{2})+1)]}.
\label{eq:NminusTwo}
\end{equation}
Eq. \ref{eq:NminusTwo} again leads to $\mathcal{F}_{\frac{n-2}{n}}(\mbox{CEB}_n)=0 $ in the limit as $n \rightarrow \infty$. In general for $k < \frac{n}{2}$ we have:
\begin{equation}
\mathcal{F}_{\frac{n-k}{n}}(\mbox{CEB}_n) =
\begin{cases}
1 - \frac{\frac{n^2-n}{2}[(\frac{n}{2})^2 -  ((\frac{(n-k)}{2})^2 +(\frac{k}{2})^2 )]}{(\frac{n}{2})^2 [(\frac{n^2-n}{2})-((\frac{(n-k)^2-(n-k)}{2})+(\frac{k^2-k}{2}))]}, & \mbox{for k even} \\
1 - \frac{\frac{n^2-n}{2}[(\frac{n}{2})^2 -  ((\frac{(n-k)^2-1}{4})+(\frac{k^2-1}{4}) )]}{(\frac{(n-k)^2-1}{4})^2 [(\frac{n^2-n}{2})-((\frac{(n-k)^2-(n-k)}{2})+(\frac{k^2-k}{2}))]} & \mbox{for k odd}.
\end{cases}
\label{eq:Nminusk}
\end{equation}
Eq.  \ref{eq:Nminusk} approaches $0$ as $n \rightarrow \infty$ for all $k$. Now since $n$ is assumed even, the case of $k=n/2$ will be examined, this leads to:
\begin{equation}
\mathcal{F}_{0.5}(\mbox{CEB}_n) = \frac{\frac{n^2-n}{2}[(\frac{n}{2})^2 - 2(\frac{n}{4})^2]}{(\frac{n}{2})^2[\frac{n^2-n}{2} - 2\frac{(\frac{n}{2})^2-\frac{n}{2}}{2}]}, \label{eq:NminusNoverTwo}
\end{equation}
so we find that $\mathcal{F}_{0.5}(\mbox{CEB}_n) = 0$ as $n \rightarrow \infty$. Since the only thing that changes for increasing $k$ beyond this point is the pre-factor in front of the third terms in both the numerator and denominator, it is clear that for all $k$, $\mathcal{F}_{\frac{n-k}{n}}(\mbox{CEB}_n) = 0$ as $n \rightarrow \infty$. This completes the proof for the case of $n$ even. The proof follows similarly when $n$ is odd.  
\end{proof}
\begin{figure}[htbp] 
\includegraphics[width=0.65\textwidth]{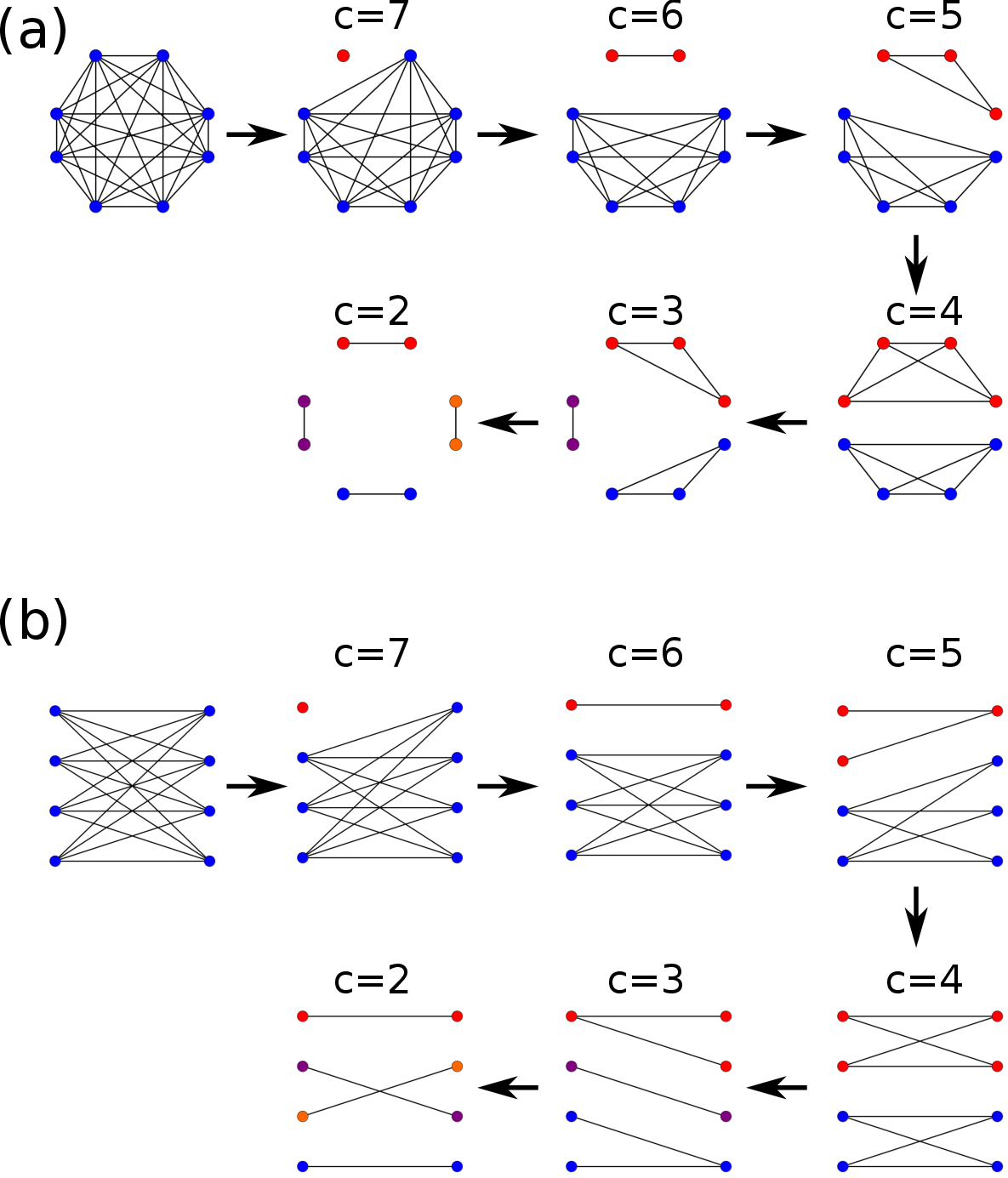}
\caption{Minimal edge removal for maximum component size $c$ in (a) a complete graph on 8 nodes ($K_8$) and (b) a CEB graph on 8 nodes ($\mbox{CEB}_8$).} \label{fig:CEB}
\end{figure} 

Of note is that this is not true in general for graphs with the same number of edges as the CEB. For instance, consider the case of even $n$ with two complete graphs of size $\frac{n}{2}$ connected together by $n$ edges, which we will call a generalized barbell or GB graph. Such a graph has the same number of edges as the CEB, and yet we can see that $\mathcal{F}_{0.5}(\mbox{GB}_n) = 1$ as $n \rightarrow \infty$ since the number of edges of the GB graph grows as order of $n^2$ but the number of edges required to split the GB graph in half grows as $n$, as opposed to the complete graph in which both the number of edges and the number of edges required to split it in half grows as order of $n^2$.
{\color{black}As is clear from the example of the generalized barbell, and prior work \cite{shi2024local}, community structure will play a large role in $\mathcal{F}_\delta(G)$, depending on the size of the communities and the values of $\delta$. The generalized barbell has two communities which are loosely connected to each other, which is the source of its fragility at $\delta=0.5$. This generalizes to a higher number of communities as well, for instance if three equal sized complete graphs are connected to each other with few edges, the fragility will be large at $\delta=\frac{1}{3}$, and so on. While real world networks are not as well structured, the community structure will resemble the preceding examples, causing the fragility to be large when $\delta = \frac{1}{\rho}$, where $\rho$ is the number of communities. }
\paragraph*{}
An example of efficient destruction of both the complete graph (Fig. \ref{fig:CEB}(a)) and CEB graphs  (Fig. \ref{fig:CEB}(b)) is shown in Fig. \ref{fig:CEB}. It can be seen that qualitatively both types of graphs fall apart at the same rate. 

\subsection{Methods for Estimating Fragility}
In certain instances, as was the case for the $\mbox{CEB}_n$ graph, it is possible to obtain a closed form expression for the fragility of a graph. However except for special cases, such an expression may be unknown or not exist as is typical for graphs found in the real world. For this reason, $\mathcal{F}_\delta$ must be estimated. {\color{black} As can be seen in Fig. \ref{fig:MethodsNotGeneral}, simple methods for estimating the fragility may fail, even in scenarios that seem rather simple. For all three of the networks presented, bi-partitioning (corresponding to $\delta=0.5$) the graph can be accomplished by the removal of a single edge, so each of these graphs has a fragility near 1. However, different simple attack strategies may lead to a vast underestimate of the fragility. In (a) choosing to attack the edge that is connected to the highest degree nodes, or alternatively attacking the edge with highest degree sum (accounting for both nodes across the edge) would lead to determining the correct fragility, while attacking the lowest degree nodes would not. In (b) attacking the edges attached to the minimum degree nodes would lead to the correct fragility, while the other mentioned methods would not, and in (c) all of the mentioned methods of attack would fail to find the true fragility.} In this section, we outline {\color{black} } greedy {\color{black} methods} for the estimation of  $\mathcal{F}_\delta$. 
\begin{figure}[b]
\includegraphics[width=0.95\textwidth]{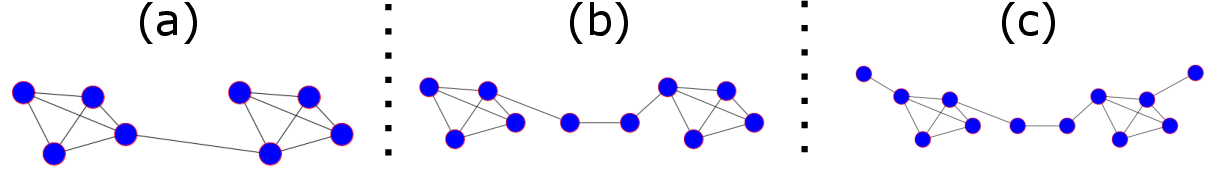}
\caption{\color{black} Shown are three networks which all have fragility close to 1 for $\delta=0.5$ since it only requires the removal of a single edge to bi-partition the graph. (a) Attacking minimum degree nodes will not find the true fragility here, though either attacking edges attached to maximum degree nodes or maximum edge-sum will. (b) Attacking edges attached to minimum degree nodes will work here, but the other methods discussed will not. (c) None of the methods will find the true fragility in this case. }
\label{fig:MethodsNotGeneral}
\end{figure}
In most prior work \cite{cohen2000,newman2003,paul2004,schneider2011,louzada2015,zeng2012,duan2016,ma2016,cohen2001,liu2017,peng2025unveiling}, the fragility of a network was estimated by greedy removal of either edges or nodes as given in Algorithm \ref{alg:GreedyEdge}. For edge removal, it is typical to apply a `destruction function', $f(\cdot)$, to each edge and choose the edge which maximizes the destruction of the network. However, recently, there has been a realization \cite{ren2018} that this may not be an optimal attack strategy. In other words networks may be more fragile than previously thought. For our purposes, we measure the amount of destruction by the size of the LCC after the edge has been removed. A smaller LCC implies a large value for $f$. A typical metric used to determine which edge to remove at each step is the edge betweenness. Edges with high edge betweenness are generally thought of as being of high importance to the network. Thus the value of $f$ in this case is the edge betweenness. An alternative attack strategy will also be used in this work, one related to the minimum degree node. In this case, every edge attached to the node of minimum degree in the network will have the same value of $f$, while edges for higher degree nodes have smaller values of $f$. Therefore the edges of the minimum degree node will be attacked first, until all such edges are stripped away. 
\begin{algorithm}[H]
\SetAlgoLined
\KwData{$G(V,E)$: Graph with vertices $V$ and edge set on $n$ edges, $E$, where $E = \{ e_{1}, e_{2}, \dots, e_{n} \}$}
\phantom{Data: } $r$: Number of removals 
\phantom{Data: } $f(\cdot)$: The "destruction function"
\KwResult{$G'$: Reduced graph}
\textbf{Initialization:} set $E' = E$ \\
 \For{$l = 1: r$}{
 \For{$k = 1:n-l+1$}{
  $a_{k} = f(e_{k}), \qquad e_{k} \in E'$
  }
  $b = \argmax_{k} a_{k}$ 
  $E' = E'\backslash e_{b}$
 }
 \textbf{Return:} $G'(V,E')$ 
\caption{Girvan-Newman($G(V,E), r, f(\cdot)$)}
\label{alg:GreedyEdge}
\end{algorithm}
Using greedy removal is a computationally efficient method to search for a set of edges $R \subseteq E$ to be removed from the edge set $E$ of the graph. We must resort to such a strategy in our search because the number possible edge sets for removal grows as $r!$, where $r = \mbox{card}(R).$ However greedy algorithms are known to produce sub-optimal results in certain circumstances, particularly if they are applied without corrective steps \cite{heller2017,jiang2014,zahedinejad2014}. 

\begin{figure}[b]
\includegraphics[width=0.49\textwidth]{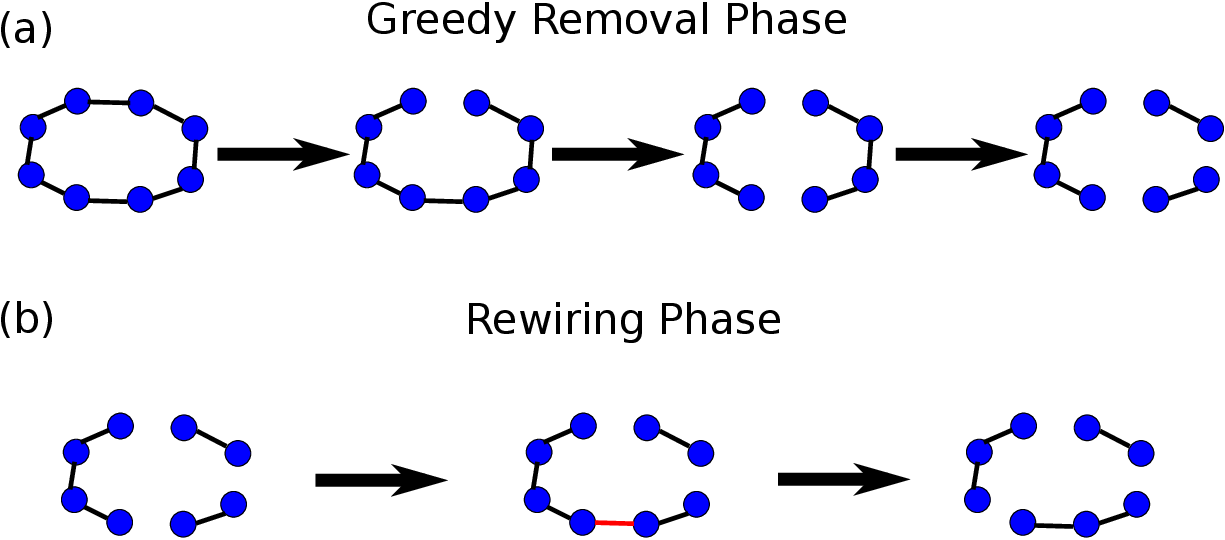}
\caption{\label{fig:fragility} Greedy Removal with Rewiring. Here we show how the algorithm works on an 8 node cycle with 3 edges to be removed. First in (a), greedy removal is performed. Then in (b), the algorithm attempts to rewire edges out of the LCC. If the rewiring results in a reduction in the size of the LCC, then it is accepted. Notice that the red edge in (b) existed originally in the network. Thus, the rewiring is constrained by the original network topology.}
\end{figure}
\begin{algorithm}[H]
\SetAlgoLined
\KwData{$G(V,E)$: Graph with $n$ vertices $V$ and edge set on $m$ edges, $E$, where $E = \{ e_{1}, e_{2}, \dots, e_{m} \}$, $V = \{v_{1},v_{2},...,v_{n}\}$}

\phantom{Data: } $r$: Number of removals 

\phantom{Data: } $f(\cdot)$: The destruction function

\KwResult{$G'$: Final graph}

\textbf{Initialization:} set $G'(V,E') =$ Girvan-Newman($G(V,E),r,f(\cdot)$), $\mbox{LCC}^{(1)}(V',E'') = \mbox{LCC}(G')$, where $E'' = \{e_{k_1}'', e''_{k_2}, ... ,e''_{k_a}\} \subset E$, $V' = \{v'_{p_1},v'_{p_2},...,v'_{p_l}\} \subset V$ \\
 
 \For{$i = 1: l$}{
 $S^{(1)} = \{e_j |e_j \in [(v'_{p_i} \times V) \cup (V \times v'_{p_i})] \cap E\}$ \\
 $S^{(2)} = \{e_j |e_j \in [(v'_{p_i} \times V') \cup (V' \times v'_{p_i})] \cap E''\}$ \\
 \If{$\mbox{card}(S^{(2)})\leq \frac{\mbox{card}(S^{(1)})}{2}$}{
 $S^{(3)} = S^{(1)}\backslash S^{(2)}$ \\
 Choose $E^{(1)} \subset S^{(3)}$ at random such that $card(E^{(1)}) = card(S^{(2)})$ \\
 $E^{(2)} = (E'' \backslash S^{(2)}) \cup E^{(1)}$ \\
 $G^{(1)} = (V,E^{(2)})$ \\
 $\mbox{LCC}^{(2)} = \mbox{LCC}(G^{(1)})$ \\
 \If{$\mbox{card}(\mbox{LCC}^{(2)}) < \mbox{card}(\mbox{LCC}^{(1)})$}{
 $G' = G^{(1)}$
 }
 }
 }
 \textbf{Return:} $G'(V,E')$ 
\caption{RewiringRemoval($G(V,E), r, f(\cdot)$)}
\label{alg:Rewiring}
\end{algorithm}

%
%

To better estimate the fragility of a network, we must move beyond a simple greedy algorithm. For this purpose, we begin by using two greedy removal strategies as the first stage: one chooses the edge with largest edge betweenness inside the LCC at each step, the other targets the edges of the lowest degree node in the LCC. In the early stages, the fastest way to destroy a network frequently is to attack the minimum degree nodes, though this is not always the case. However in later stages, especially for $\delta <<1$, the edge betweenness strategy is the most effective targeted attack strategy. Thus, combining these two, a more optimal set of edge removals may be obtained for any given $G$ and $\delta$.
\paragraph*{}
After the greedy removal phase has been completed, there is a perturbed network $G'(r)$ with edge set $E'$. A second stage of the algorithm is now performed, which involves rewiring edges from the $\mbox{LCC}^{(1)} = \mbox{LCC}(G'(r))$ to components outside of $\mbox{LCC}^{(1)}$. The rewiring is constrained by the original network structure as shown in Algorithm \ref{alg:Rewiring}. In this stage, candidate nodes are identified from $\mbox{LCC}^{(1)}$, with recognition that a node cannot be rewired out of $\mbox{LCC}^{(1)}$ if it has more edges inside $\mbox{LCC}^{(1)}$ than edges which have been removed from that node. Thus, only a subset of nodes in $\mbox{LCC}^{(1)}$ is chosen for the attempted rewiring. Once this subset has been determined, nodes from $\mbox{LCC}^{(1)}$ are rewired to other components of the network, and edges can only be swapped out for edges which were removed from $G$.
\paragraph*{}
The rewiring algorithm also faces a combinatorial problem. To see that this is the case, let $E''$ be the edges of $\mbox{LCC}^{(1)}$. Now define 


\begin{align}
\nonumber G &= (V,E), \\ \nonumber
G' &= (V,E'), \\ \nonumber
\mbox{LCC}^{(1)} &= (V',E''), \\ \nonumber
V' &= \{v_{p_1}, ..., v_{p_l}\} \\ 
E &= \{e_1,...,e_m\}\\ \nonumber
S_i^{(1)} &= \{e_j |e_j \in [(v'_{p_i} \times V) \cup (V \times v'_{p_i})] \cap E\}, \\ \nonumber
S_i^{(2)} &= \{e_j |e_j \in [(v'_{p_i} \times V') \cup (V' \times v'_{p_i})] \cap E''\} 
\label{eqn:rewiring}
\end{align}

Clearly, if 
\begin{equation}
    \frac{\mbox{card}(S^{(1)}_i)}{2} - \mbox{card}(S^{(2)}_i) > 0,
\end{equation}
then there will be choices of which edges to use from the original edge set for rewiring. When this issue arises, a single random set of edges with cardinality $\mbox{card}(S^{(2)}_i)$ is chosen.
\paragraph*{}
The rewiring is only accepted if the size of $\mbox{LCC}(G'(r))$ decreases. That is, if we let $\mbox{LCC}^{(2)}$ be the largest connected component of the rewired graph $G''(r)$, then rewiring is only performed if
\begin{equation}
    \mbox{card}(\mbox{LCC}^{(2)})<\mbox{card}(\mbox{LCC}^{(1)}).
\end{equation}  
It is possible that the rewired graph may allow for additional rewiring, so the algorithm is applied recursively until the largest component no longer decreases in size. This algorithm will never do worse than greedy removal, {\color{black} since the rewiring is only selected if it is better than the greedy removal}. Once the rewiring stage is completed for both greedy strategies a final stage is completed, as described below. The candidate removal set with the fewest edges removed is then chosen among the two candidate sets, one from the minimum degree attack strategy and the other from edge betweenness. 

\begin{algorithm}[H]
\SetAlgoLined
\KwData{$G(V,E)$: Graph with vertices $V$ and edge set on $n$ edges, $E$, where $E = \{ e_{1}, e_{2}, \dots, e_{n} \}$}

\phantom{Data: } $r$: Number of removals 

\phantom{Data: } $f(\cdot)$: The destruction function

\phantom{Data: } $c$: The maximum allowed $card(LCC)$

\KwResult{$G'$: Final graph}

\textbf{Initialization:} set $G'(V,E') =$ RewiringRemoval($G(V,E),r,f(\cdot)$), $n = \mbox{card}(E)$. \\
 
 \For{$l = 1: n$}{
 \If{$\mbox{card}(LCC(G' \cup e_l)) \leq c $}{
  $E' = E' \cup e_{l}$
  }
 }
 \textbf{Return:} $G'(V,E') $
\caption{IterativeAddBack($G(V,E), r, f(\cdot),c$)}
\label{alg:IterativeAddBack}
\end{algorithm}

Algorithm \ref{alg:IterativeAddBack} acts as the final step in the new algorithm for estimating the fragility of the network. This stage is performed after the greedy removal and rewiring stages have been completed. It involves iteratively adding back any edges from the original network $G$ to the perturbed graph $G'$ which do not increase the size of the $\mbox{LCC(G')}$ beyond the largest allowable component size $c$. This final stage allows components (typically other than the $\mbox{LCC}$ but not necessarily after rewiring) to be `regrown' up to cardinality $c$. {\color{black} We note that this method remains an iterative removal method, however, it improves upon the simply applied greedy removal methods available by adding edges back in that did not need to be removed to split the network into the desired pieces.} Code for this method is made available at \cite{fish2022}.

\subsection{Numerical Results}
In this section, a comparison of the various methods outlined above will be presented. Random edge removal and targeted attack are performed, both on synthetic as well as real networks. Performance of these attack methods is examined in terms of the network fragility measure developed above.

\subsubsection{Real Network Data}
\label{ssec:data}
A real network is generated from Safegraph data for comparison of the performance of the techniques.
Data was obtained from the Safegraph mobility dataset \cite{safegraph}. Location data was collected from over 20 million devices. We first consider a shopping mall, where each business has an independent entry (see Figure \ref{fig:WBMall} for layout). 

\begin{figure}[b]
\includegraphics[width=0.48\textwidth]{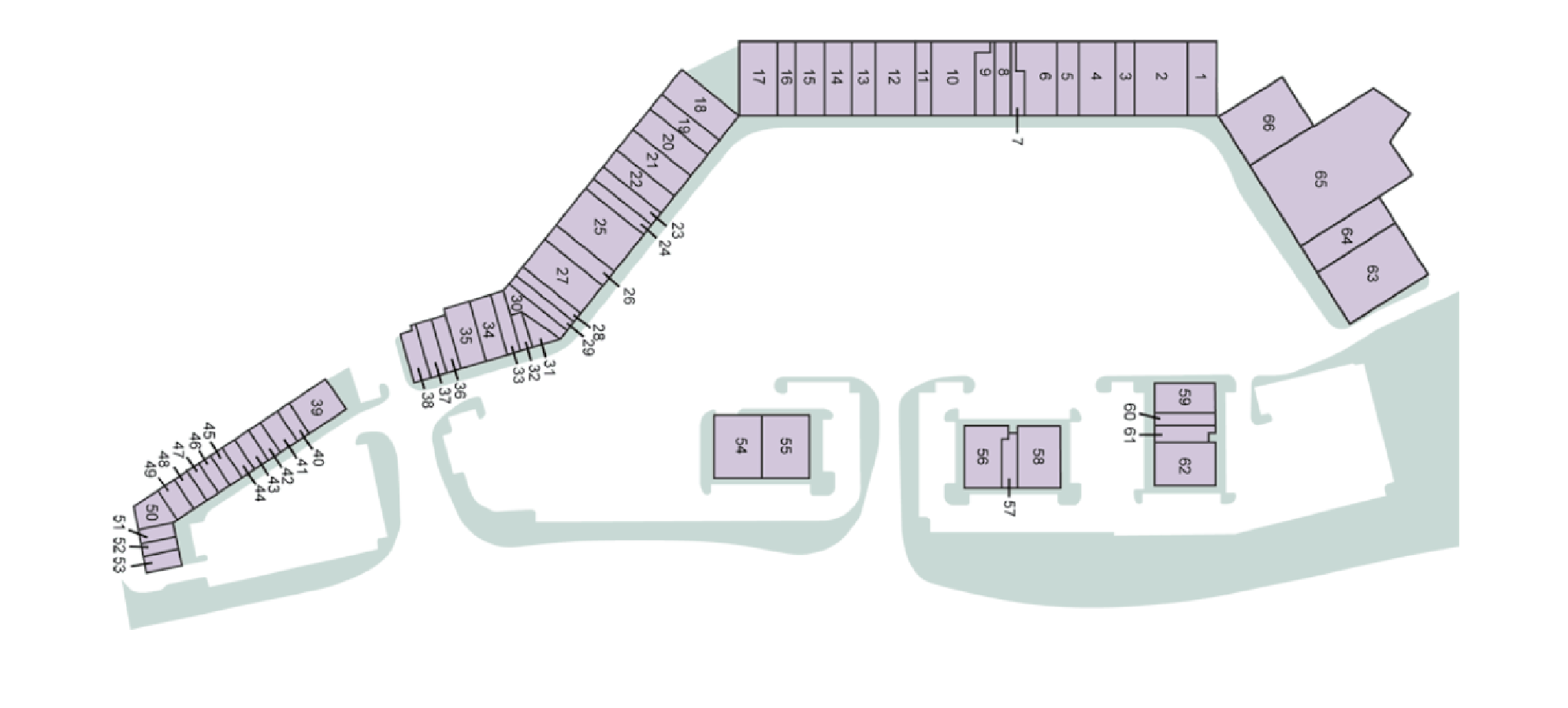}
\caption{Layout of the mall. Each numbered space is a business in the shopping mall. All businesses have outside entries, and are not connected to each other internally. }
\label{fig:WBMall}
\end{figure}

The Safegraph-tracked devices in the shopping mall were classified as (a) those whose location is precise to the business they are in; and (b) those devices are in the mall, but their location in the mall is not known more precisely. These data are sampled at hourly intervals. 

To build a network, we combine the Safegraph data with a publicly available layout of the mall. We start with a single snapshot in time. Devices whose locations are known at the business level were first placed. The rest of the devices were placed at random locations in the mall. A network is formed assuming Bluetooth connectivity between devices. Nominally, a ten-meter range is assumed for Bluetooth communications. A pair of devices within this range is assumed to be connected, unless there are walls between them. For each wall between the pair of devices, the range is halved \cite{Achalla20}. An example of a network in this way is shown in Figure \ref{fig:WBMall_4pm}, where the red dots are devices identified to be in specific businesses, and the blue dots, the devices placed at random locations. 

\begin{figure}[b]
\includegraphics[width=0.48\textwidth]{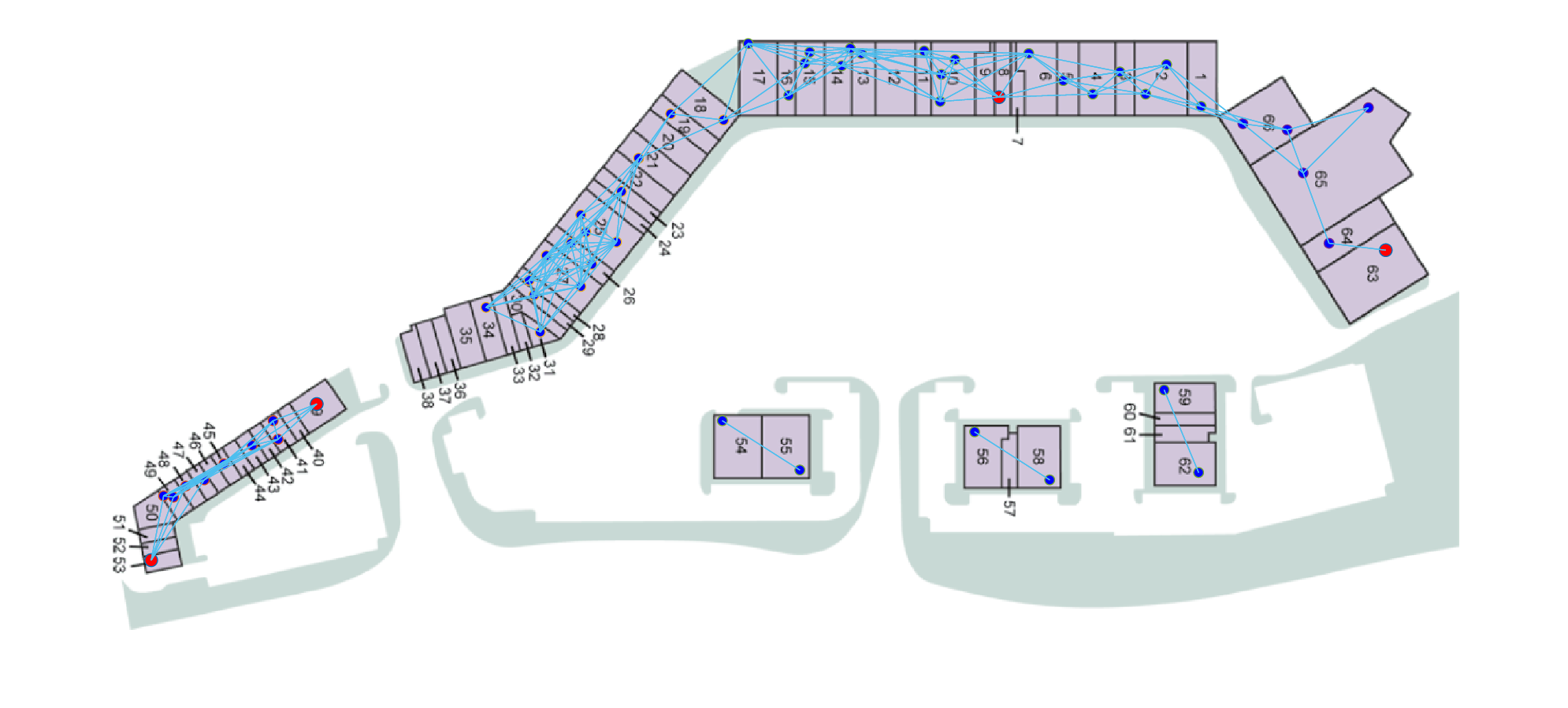}
\caption{Devices in the mall are connected to form a network. A pair of devices are connected if they are within communications range of each other. Red dots are devices identified to be in specific businesses, and the blue dots, the devices placed at random locations.}
\label{fig:WBMall_4pm}
\end{figure}

\subsubsection{Random Edge Removal}
\label{ssec:results_random_removal}

We consider three random graph architectures to assess the effects of random edge removal: the Watts-Strogatz (WS) {\color{black} \cite{watts1998collective} model, which initializes a ring of nodes connected to their $k$ nearest neighbors and then rewires edges drawn uniformly at random with probability $p$, the Erd\H{o}s-R\'{e}nyi (ER) \cite{erdds1959random} model, which has a parameter $p$ which controls the probability of any pair of nodes being connected by an edges, and the Barab\'{a}si-Albert (BA) \cite{barabasi1999emergence} model, which is a growing network model that has a parameter $m$ that controls the number of edges a node joining the network attaches to, and the probability of attachment to existing nodes is based upon the fraction of edges they are already attached to.} In addition, we consider the Safegraph mall network described above. For the synthetic graphs for each random realization, the degree distribution was determined. Edges were targeted at random, and after each edge was removed, the degree distribution of the resulting graph was evaluated. The process was continued until all edges were removed. The Hellinger divergence between the original graph degree distribution and the degree distribution after each removal was calculated.{\color{black} The Hellinger divegence can be calculated as follows:
Let $P$ and $Q$ be discrete valued probability distributions. Then the Hellinger divergence \cite{hellinger1909neue}, $H(P,Q)$, is defined as:
\begin{equation}
H(P,Q) = \frac{1}{\sqrt{2}}||P-Q||_2.
\end{equation}
For the discussion that follows, to find the distributions $P,Q$ we used the empirical probability for each degree in the degree distribution.
}
This entire process was repeated multiple times for each graph configuration and the Hellinger divergence values were averaged across 100 trials for each number of removed edges. Videos were produced, which are included as supplementary material, of the change both in network structure and Helinger divergence as each edge is removed. 
Random removal is not an efficient attack mechanism however and thus gives poor estimates of the fragility of a network. Indeed, in all cases examined the estimated value of $\mathcal{F}_{0.5}$ was negative, which clearly makes random removal inappropriate for this estimation.

\subsubsection{Targeted Attack}
We examined {\color{black} five} types of targeted attack, attacking the minimum degree nodes, attacking sequentially via edge-betweenness centrality, {\color{black} edge-sum which accounts for the total sum of the degrees across the edge, and edge-collective influence (ECI) \cite{peng2025unveiling} which is a generalization of collective influence \cite{morone2015influence} to edges, } and the newly proposed method with {\color{black} both rewiring and the iterative add back algorithm performed. We note that the scores for each of these methods is updated iteratively at each step, rather than in a single shot, to give each the best chance at estimating the fragility}. For comparison with the random edge removal method, the targeted attack methods were applied to the same networks as random removal, until $E = \emptyset$. Videos, included as supplementary material, have been produced showing changes in the degree distribution as well as the Hellinger divergence \cite{pardo2018} between the initial degree distribution and the new degree distribution (after each edge removal) for the various attack strategies.
\paragraph*{}
In  Table \ref{tab:RandomNets}, the fragility at $\delta = 0.5$ is estimated (i.e. the largest connected component is no larger than half of all of the nodes) by averaging over $100$ realizations of the BA, ER and WS graphs, each with $500$ nodes and exactly $1984$ edges. For the BA network the model parameters were $n=500$ and $m = 4$, for the ER graph $n = 500$, and $p$ is chosen for each network realization so that the number of edges is exactly $1984$, and for the WS network, $n = 500$ and $k = 8$, $p = 0.2$, and then edges were removed at random until the network had exactly $1984$ edges {\color{black} to maintain a constant edge density}. {\color{black} Additionally, } the fragility at $\delta = 0.5$ for the Safegraph mall network shown in Fig. \ref{fig:WBMall}{\color{black}, as well as three networks found on the network repository \cite{nr} US Air 97 (a flight network), DM-LC (a fruit fly protein-protein network) and the Florida ecological network, was estimated. 
These real world networks allow for a variety of implications in terms of their fragility. For instance, the Safegraph Mall network would have relevance to disease spread, as limiting contact between individuals who are sick would be preferential, and the estimated fragility of this network implies that breaking the network apart would be relatively easy in a control scenario, as relatively few edges are needed to be removed to split the graph in half. A similar story emerges for the US Air 97 network, though this seems to be undesirable, as one would easily be trapped in only one half of the network if relatively few flights were canceled. The high fragility of the fruit fly protein-protein network, implies a brittle genome for the fruit fly, meaning that small changes in the genome can lead to large effects, a fact that has been observed, see for example \cite{torres2025heterochrony}. Contrasting these with the Florida ecological network, a food web, which has a much lower fragility than the others, which implies the food network could absorb a larger impact at the species level before falling apart, which is clearly desirable to animals in the food chain}. {\color{black} Furthermore, we tested two other algorithms for bi-partitioning the graph, METIS \cite{karypis1997} and a modified version of the Kernighan-Lin algorithm \cite{kernighan1970} where the initial partition is not simply random. In this case, we initialize one partition to be the first half of the nodes in a depth first search rooted at the highest degree node (or if there is a tie from a randomly chosen node with highest degree).}
{\color{black} We observed that across the tested network types, the edge-sum algorithm consistently returned a lower estimate than both the minimum degree, and the edge betweenness value, and thus to preserve space in the table we did not report its value. The table clearly shows that estimation techniques such as edge betweenness, minimum degree, and edge collective influence consistently underestimate the fragility of the networks, in fact in none of the cases examined did any of those algorithms find the optimal removal set.}
\paragraph*{}
{\color{black} In Figure \ref{fig:FragsParams}, we see how the estimated fragility of networks changes as the density changes over networks containing $500$ nodes. In Fig. \ref{fig:FragsParams} (a) The Watts-Strogatz graphs, which are also known as small-world graphs, are shown for increasing $k$. As $k$ increases, the density of the network increases, since $k$ controls how many nearest neighbors are connected within a ring structure. As expected, for small values of $k$, the WS graphs are highly fragile, this is because it only requires a small set of edges to be removed to split a ring in half. The rewiring probabilities were chosen from the set $p \in \{0.1,0.3,0.5,0.7,0.9\}$, and when the rewiring probability is smaller, more of the ring structure is maintained, alternatively as the rewiring probability increases the network becomes more and more like an ER network. For smaller values of $p$ the estimated fragility remains higher longer, this is related to maintainence of the ring-like structure. However, interestingly, at higher densities, the networks with larger values of $p$ exhibit a non-monotonic behavior with increasing density, it is unclear if this is an artifact of the estimation techniques, or if there is some complex interplay between the density and the rewiring probability. In any case, for all values of $p$ the estimated fragility converges to $0$ as the networks get arbitrarily close the complete graph. In (b) for BA networks with increasing $m$, and (c) for ER networks with increasing $p$, we see that both network types do exhibit a monotonic behavior in fragility as the density increases. Of the tested network types, the ER networks appear to be the least fragile when holding density constant.}
\begin{table}[]

    \centering
    \begin{tabular}{|l|c|c|c|c|c|c|c|}
    \hline
       {\bf Network Type}  &{\bf Best Estimate ($ <\mathcal{F}_{0.5}>$)}& {\bf With rewiring} & {\bf METIS} & {\bf Kernighan-Lin} & {\bf Min. Degree} & {\bf Edge Betweennness} & {\bf ECI} \\ 
       \hline
        ER 
       & 0.66& 0.43 & {\bf 0.66} & 0.65& -0.03 & -0.01 & 0.04\\
       \hline      
       BA 
       & 0.64 & 0.36 & {\bf 0.64} & 0.63 & 0.05 & -0.07 & 0.23 \\
       \hline
       WS 
       & 0.74 & {\bf 0.74} & 0.68 & 0.68 & -0.10 & 0.58 & 0.54\\
       \hline
       Safegraph Mall & 0.98 & 0.84 &{\bf 0.98} & 0.97 & 0.84 & 0.84 & 0.67\\
       \hline
       {\color{black} US Air 97} & 0.80 & {\bf 0.80} & 0.64 & 0.76 & 0.70 & 0.75 & 0.34 \\
       \hline
       {\color{black} DM-LC} & 0.98 & {\bf 0.98} & {\bf 0.98} & 0.91 & 0.73 & 0.98 & 0.89 \\
       \hline
       {\color{black} Florida Ecological} & 0.33 & 0.26 & {\bf 0.33} & -0.02 & -0.05 & -0.07 & -0.15 \\
       \hline
      
    \end{tabular}
    \caption{\label{tab:NF} Estimated Network Fragility. These are the estimated fragility values with $\delta=0.5$ for three different network types, Erd\H{o}s-R\'enyi (ER), Barabasi-Albert (BA) and Watts-Strogatz (WS), as well as the Safegraph mall network. The values are averaged over 100 network realizations, and in each run, the networks each had $500$ nodes and exactly $1984$ undirected edges, with the exception of the real world graphs, the Safegraph Mall, which has $133$ nodes and $730$ edges. The US Air 97, a flight network, has $332$ nodes and $2126$ edges. DM-LC, which is the Drosophila Melanogaster (common fruit fly) protein-protein interaction network, has $658$ nodes and $1129$ edges. Finally, the florida ecological network, has $128$ nodes and $2075$ edges. The best estimate is provided, chosen from among the most fragile value across all of the algorithms. In most cases, the METIS algorithm provides the best estimate for $\delta =0.5$, though the method with rewiring (which is both rewiring and iterative add back) provides the best estimate for the WS class of networks and the US Air 97 graph, as well as finds an identical value to METIS for the DM-LC network. 
    }
    \label{tab:RandomNets}
\end{table}
\begin{figure}[b]
\includegraphics[width=0.55\textwidth]{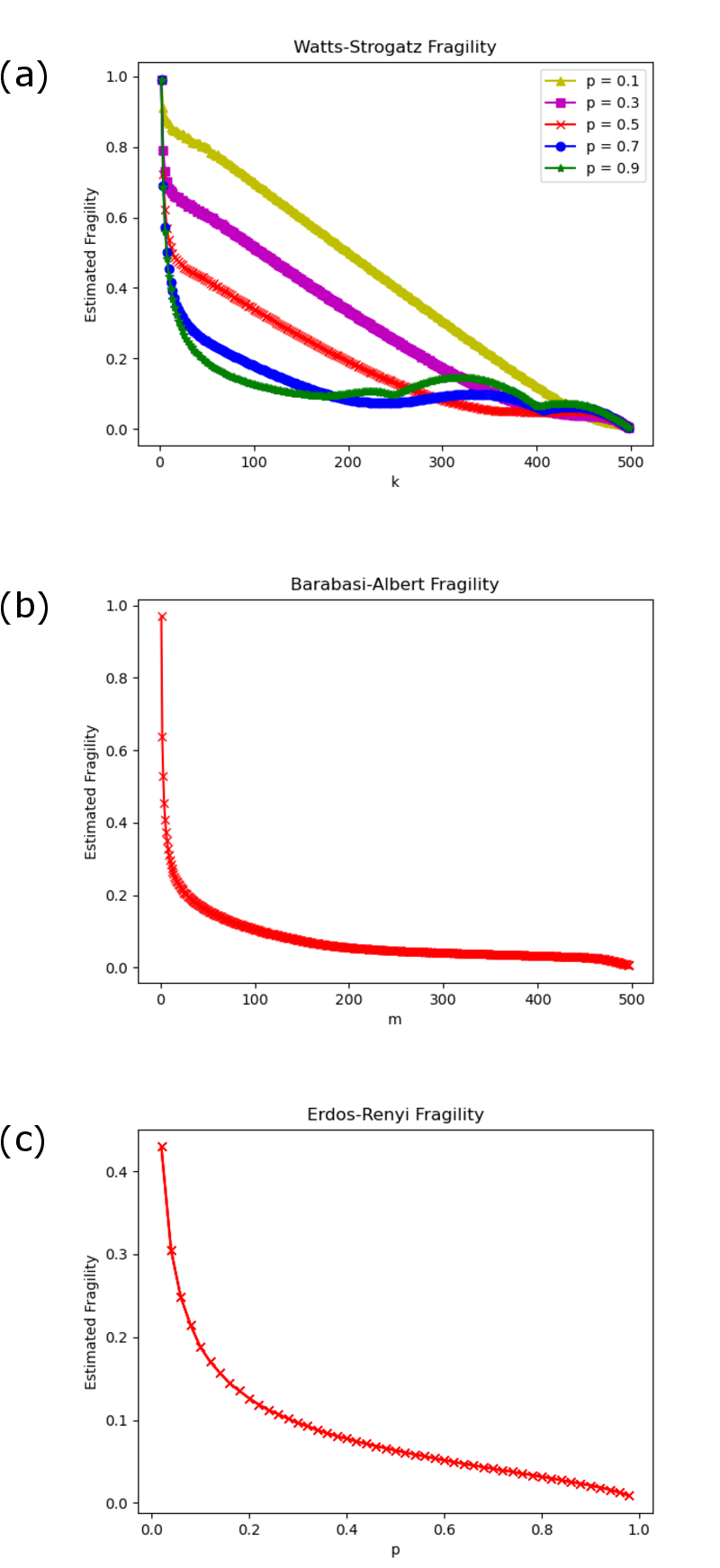}
\caption{\color{black} The estimated fragilities of 500 node networks with $\delta=0.5$, averaged over $100$ network realizations for each parameter value across (a) Watts-Strogatz networks with $p \in \{0.1,0.3,0.5,0.7,0.9\}$ and $k$ ranging from $2$ to $498$ uniformly spaced, (b) Barabasi-Albert networks with $m$ ranging from $1$ to $500$ uniformly spaced, and (c) Erd\H{o}s-R\'{e}nyi networks with $p$ ranging from $0.02$ to $0.98$.}
\label{fig:FragsParams}
\end{figure}
\paragraph*{}
In the supplementary videos, it can be seen that the various attack strategies lead to quite different degree distributions as measured by the Hellinger divergence. To further illustrate this point, in Table \ref{tab:RandomNets}, it is shown that the estimated average value of $\mathcal{F}_{0.5}$ is lower for both the minimum degree and edge betweenness attack strategies than the estimated value our method mixing both along with rewiring. {\color{black} For most of the scenarios, the best performing algorithm for estimation is METIS, with the exception of the Watts-Strogatz networks. METIS also offers a significant computational speed advantage over the other techniques, though neither it nor the Kernighan-Lin method are easily generalizable to all $\delta$ values.} This suggests that graphs are generally more fragile to edge removal than was previously understood. {\color{black} Interestingly, BS, ER, and WS all produced similar estimated fragilities, though the most fragile appeared to be the WS, indicating that such networks are generally more susceptible to bi-partioning via targeted attack.} Additionally, we note the high fragility of the Safegraph mall network, which suggests that removing a small number of edges in a person-to-person interaction network may quickly disintegrate the network. This may have implications for strategies for limiting epidemic spread among other applications.

\section{Conclusions}
In this work, we have presented a new measure for the fragility of a network to edge attacks. This measure differs from previous measures in that it does not rely on sequential attacks and does not represent a network as a single value. From this, a measure of robustness is derived. The concept of asymptotic robustness is presented. It is shown that in this new measure, the complete graph is robust. Additionally, a class of graphs which is sparser than the complete graph, {\color{black}the CEB}, is shown to be asymptotically robust. Finally, an algorithm for estimating the fragility of a general graph is presented. It is shown that graphs tend to be more fragile than previous methods would indicate and thus care should be taken when designing networks which may be subject to edge removal. 
\paragraph*{}
This work focused on the case in which we have global information about the edges of a graph. Frequently, only local information about the graph structure may be obtained. This suggests that in future work it may be beneficial to estimate the fragility when such global information is unavailable.

\begin{acknowledgments}
JF, EB and MB were all supported by DARPA, and EB is supported by the ARO, the NIH and ONR.
\end{acknowledgments}





\bibliography{apssamp}

\end{document}